\font\greek=grmn1000 at 11pt
\newtheorem{definition}{\bfseries Definition}[section]}
\newtheorem{theorem}{Theorem}[section]
\newtheorem{proposition}{Proposition}[section]
\newsavebox{\proofbox}
\savebox{\proofbox}{%
  \begin{picture}(7,7)\put(0,0){\framebox(7,7){}}\end{picture}%
}
\newenvironment{proof}[1][]{%
  \list{}{\leftmargin0pt
    \rightmargin\leftmargin}%
  \item[]{\hspace*{1em}\textbf{Proof}\ #1}%
}
{\hspace*{\fill}{\usebox{\proofbox}}\endlist}
\begin{document}
\title{Fuzzy Aristotelian Diagrams}
\author{Apostolos Syropoulos\\ Xanthi, Greece\\ \texttt{asyropoulos@yahoo.com}}
\maketitle
\begin{abstract}
After a concise introduction to the square of opposition, in particular, and, Aristotelian Diagrams, in general, I describe
how one can create a mathematical universe to host these objects. Since these objects assume that the underlying logic is the
bivalent logic, I have used these objects as a starting point to introduce fuzzy Aristotelian Diagrams and  
describe a mathematical formulation of them. In addition, I outline the cdharacteristrics of a mathematical universe 
that hosts them.
\end{abstract}
\section{Introduction}
The square of opposition is a diagram that graphically represents a collection of logical relationships~\cite{sep-square}. 
This diagram was particularly useful for logicians. More specifically, they used it in their work for more than 2000 years. The ideas 
behind the square of opposition have their roots in the work of the Greek philosopher Aristotle, who lived in the fourth century BC. 
De Interpretatione 6–9 discusses the rule of contradictory pairs, which forms the basis of the square of opposition. This rule states 
that for any contradictory pair of statements, only one is true and the other is false. {\em De Interpretatione} or 
{\em On Interpretation} (Greek: {\greek Per'i <ermhne'ias}; Latinized Greek: Peri Hermeneias) is the second book of 
Aristotle’s Organon and deals with the relationship between language and logic~\cite{jones2010}. As it happens with many 
ideas and scientific results of the past, the square of opposition has been severely criticized in recent decades; 
nevertheless, it is still something that is frequently used by logicians.

The underlying logic of the square of opposition is the classical bivalent logic. Some call it Aristotelian logic, although 
in chapter 9 of De Interpretatione, Aristotle asked whether it makes sense to say that a sentence about a future event 
that can occur or cannot occur is true or false. Thus, the so-called problem of future contingents indirectly speculates 
about more than two truth values.

We use fuzzy sets and their extensions, such as Krassimir Atanassov's {\em intuitionistic} fuzzy sets~\cite{atanassov2012}, 
as mathematical tools to address vagueness. But what is vagueness? Generally speaking, David Lanius~\cite{lanius2021} correctly 
points out that vagueness leads to the Sorites paradox, borderline cases, and the violation of the logical principle of bivalent 
logic; readers who are not familiar with the Sorites paradox and borderline cases should consult any presentation in layman’s 
terms (e.g., see~\cite{syropoulos2023b}). However, Lanius claims that vagueness is inherently negative. In my own opinion, this 
claim is wrong since vagueness is a fundamental property of our world. Currently, quantum mechanics is our best theory for 
accurately describing the building blocks of our universe, and it is considered a vague theory! Why? Because the properties of 
our cosmos' building blocks are influenced by the underlying property of vagueness (refer 
to~\cite{syropoulos2021, syropoulos2021b, syropoulos2023}). Of course, vagueness is a semantic property of linguistic 
expressions~\cite{bones2020}, but this is something I will not further explore.

Probability theory is a branch of mathematics that studies the outcome of random experiments (e.g., the roll of a die, 
the results of horse racing, the maximum and the minimum temperature in New York in two weeks from today, etc.). For any random experiment, 
we must be aware of all possible outcomes and determine whether the result of the experiment is truly random (i.e., unpredictable). 
Depending on the characteristics of the experiment, {\em probabilities} can be assigned to the possible outcomes. These probabilities are 
numbers that belong to the unit interval. For example, when we throw a balanced, six-faced die, the possible outcomes are 1, 2, 3, 4, 5, 
and 6, and to each of them we assign a probability of 1/6. Thus, the probability that the next roll of the die will be 6 is 1/6. Some 
authors consider that probabilities and membership degrees are the same thing or at least two facets of the same concept. In fact, 
Zadeh, the founder of fuzzy set theory, believed this. However, this idea is totally wrong from a pragmatic and, more generally, a 
philosophic perspective. It's important to note that probabilities and membership degrees measure entirely distinct entities. A 
probability is the likelihood degree of some event, while a membership degree is a truth value (i.e., something that {\em asserts} to what 
degree something has a particular property or to what degree something will happen). This is precisely why I do not find the idea 
of proposing a probabilistic version of the square of opposition~\cite{pfeifer2017} to be useful.

It is my firm belief that a reinterpretation of the square of opposition using fuzzy mathematics is quite reasonable. For example, 
one such work uses functional degrees of inclusion to reinterpret contradiction~\cite{madrid2024}. However, I am not merely interested 
in fuzzifying the square of opposition but in making it ``alive'' in a vague universe. Such a universe could be a vague 
category~\cite{mclane1998}. Leander Vignero~\cite{vingero2021} has already examined some properties of a category 
whose objects are squares of opposition. In addition, Alexander De~Klerck, Leander Vignero, and Lorenz Demey~\cite{deKlerck2023} 
defined different morphisms between squares of opposition and thus defined different categories.

\paragraph{Plan of the paper} First I will properly introduce the square of opposition. Next, I will explain how to fuzzify the square of 
opposition using intuitionistic fuzzy logic, and afterward, I will demonstrate how to construct a suitable universe for these concepts. 
The paper concludes with the customary conclusions section.
\section{Traditional Aristotelian Diagrams}\label{crisp:diagrams}
The square of opposition looks surprisingly  like a commutative diagram (see Figure~\ref{trad::sq:opp}); however, its corners are 
{\em logical forms} and not some ordinary objects. The table that follows describes these four logical forms.
\begin{center}
\begin{tabular}{l|l}\hline
FORM                  & TITLE\\\hline
Every $S$ is $P$      & Universal Affirmative\\
No $S$ is $P$         & Universal Negative\\
Some $S$ is $P$       & Particular Affirmative\\
Some $S$ is not $P$   &	Particular Negative\\ \hline
\end{tabular}
\end{center}
\begin{figure}[t]
\begin{center}
\includegraphics[scale=1]{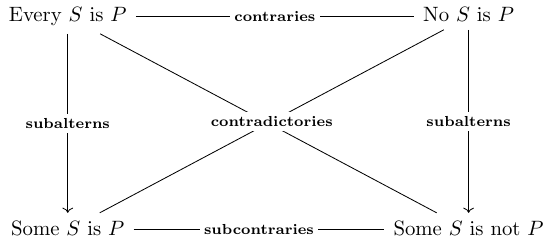}
\end{center}
\caption{The traditional square of opposition.}\label{trad::sq:opp}
\end{figure}
If the diagram in Figure~\ref{trad::sq:opp} is not a commutative diagram, then what is the meaning of this diagram? Put differently, what 
facts does this diagram convey? The following {\em theses}, as they are called by the author of~\cite{sep-square}, are expressed in 
the diagram of Figure~\ref{trad::sq:opp}.
\begin{itemize}
\item ``Every $S$ is $P$'' and ``Some $S$ is not $P$'' are contradictories.
\item ``No $S$ is $P$'' and ``Some $S$ is $P$'' are contradictories.
\item ``Every $S$ is $P$ and ``No $S$ is $P$ are contraries.
\item ``Some $S$ is $P$'' and ``Some $S$ is not $P$ are subcontraries.
\item ``Some $S$ is $P$'' is a subaltern of ``Every $S$ is $P$.''
\item ``Some $S$ is not $P$’ is a subaltern of ``No $S$ is $P$.''
\end{itemize} 
Figure~\ref{mod::sq:opp} shows a {\em modern} version of the square of opposition. People rarely use this diagram because 
it conveys little information.
\begin{figure}[t]
\begin{center}
\includegraphics[scale=1]{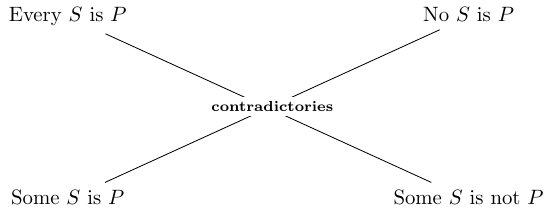}
\end{center}
\caption{A modern version of the square of opposition.}\label{mod::sq:opp}
\end{figure}

Mathematically speaking, a square of opposition is an {\em Aristotelian} diagram and {\em Boolean} algebras describe these diagrams. 
But what is a Boolean algebra? The following definition explains what a Boolean algebra is.\footnote{Weisstein, Eric W. ``Boolean Algebra.'' 
From MathWorld--A Wolfram Web Resource. 
\url{https://mathworld.wolfram.com/BooleanAlgebra.html}} 
\begin{definition}
A Boolean algebra is a set $B$ of elements $a, b,\ldots$ with the following properties:
\begin{enumerate}
 \item  $B$ is equipped with two binary operations, $\wedge$ and $\vee$, which satisfy the {\em idempotent} laws 
 \begin{displaymath}
 a\wedge a=a\vee a=a,
 \end{displaymath}
 the {\em commutative} laws
 \begin{align*}
a \wedge b	&=	b \wedge a\\	
a \vee b	&=	b \vee a,
 \end{align*}
 and the {\em associative} laws
 \begin{align*}
a \wedge (b \wedge c) &= (a\wedge b)\wedge c\\	
a\vee (b \vee c)      &= (a\vee b) \vee c.
 \end{align*}
\item In addition, these operations satisfy the {\em absorption} law
 \begin{displaymath}
 a\wedge (a \vee b)=a \vee (a \wedge b)=a.
 \end{displaymath} 
\item Also, the operations are {\em mutually distributive}
 \begin{align*}
 a\wedge (b \vee c) &= (a\wedge b)\vee (a\wedge c)\\	
 a\vee (b\wedge c)  &= (a\vee b)\wedge (a \vee c).
 \end{align*}
\item $B$ contains universal bounds $0$ and $1$  which satisfy
 \begin{align*}
 0\wedge a &=0 \\	
 0\vee a	  &= a\\	
 1\wedge a &= a\\	
 1\vee a	&=1.
 \end{align*}
\item $B$ has a unary operation $\neg a$ of {\em complementation}, which obeys the following laws:
 \begin{displaymath}
 a\wedge \neg a = 0\quad\text{and}\quad a\vee \neg a=1.
 \end{displaymath}
\end{enumerate}
\end{definition}
Using this defintion, we define Aristotelian diagrams~\cite{deKlerck2023}:
\begin{definition}
An {\em Aristotelian} diagram $D$ is a pair $(F, B)$, where $B$ is the underlying set of the Boolean algebra 
$(B, \wedge_{B}, \vee_{B} , \neg_{B} , 1_B , 0_B)$ and $F\subseteq B$. 
When the Boolean algebra is clear from context, we usually omit the subscripts from $\wedge$, $\vee$, etc.
\end{definition}
The next thing we need to know is how the theses described above are expressed in the language of Boolean algebras.
\begin{definition}\label{arist:diag:rel}
Assume that $(F,B)$ is an Aristolenian diagram. Then, we say that $x, y\in B$ are in:
\begin{enumerate}
\item $B$-bi-implication ($\text{BI}_B$) if and only if $x = y$;
\item $B$-left-implication ($\text{LI}_B$) if and only if $x < y$;
\item $B$-right-implication ($\text{RI}_B$) if and only if $y < x$;
\item $B$-contradictory ($\text{CD}_B$) if and only if $x \mathbin{\wedge_B} y = 0_B$ and $x \mathbin{\vee_B} y = 1_B$, 
      that is, $x = \mathop{\neg_{B}} y$;
\item $B$-contrary ($\text{C}_B$) if and only if $x \mathbin{\wedge_B} y = 0_B$ and $x\mathbin{\vee_B} y\neq 1_B$, 
      that is, $x \mathrel{<_B} \mathop{\neg_{B}} y$;
\item $B$-subcontrary ($\text{SC}_B$) if and only if $x\mathbin{\wedge_B} y\neq 0_B$ and $x\mathbin{\vee_B} y = 1_B$, that is,
      that is, $x \mathrel{>_B} \mathop{\neg_{B}} y$;
\item $B$-unconnectedness ($\text{Un}_B$) if and only if none of the above holds.
\end{enumerate}
\end{definition}
The above relations are called {\em logical} relations and are denoted by $\Re$. Also, the first three relations are  called 
{\em implication} relations and relations 4–6 are called {\em opposition} relations. The following definition provides 
a formal mechanism for relating Aristolenian diagrams.
\begin{definition}
An Aristotelian isomorphism $f: (F_1, B_1)\rightarrow(F_2,B_2)$ is a bijection $f: F_1\rightarrow F_2$
such that for all logical relations $R\in\Re$  and all $x, y\in F_1$ we have that 
$x\mathrel{R_{B_1}} y$ if and only if $f(x)\mathrel{R_{B_2}}f(y)$.
\end{definition}
This is strong way to go from one diagram to another and is not bery useful. A weaker one make use of the following order relation:
\begin{definition}
There is an i{\em nformativity order} $\le_i$ on $\Re$ which is given by: $\text{Un}\mathrel{\le_i}\text{LI}$, 
$\text{Un}\mathrel{\le_i}\text{RI}$, $\text{Un}\mathrel{\le_i}\text{C}$, 
$\text{Un}\mathrel{\le_i}\text{SC}$, $\text{LI}\mathrel{\le_i}\text{BI}$, 
$\text{RI}\mathrel{\le_i}\text{BI}$, $\text{C}\mathrel{\le_i}\text{CD}$, and $\text{SC}\mathrel{\le_i}\text{CD}$.
\end{definition}
{\em Infomorphisms} are defined as follows:
\begin{definition}
An infomorphism $f:(F_1,B_1)\rightarrow (F_2, B_2)$ is a function that satisfies the following condition: 
for all $x, y\in F_1$ it holds that if $x\mathrel{R_{B_1}}y$, then $f(x)\mathrel{S_{B_2}}f(y)$ with $R\mathrel{\le_i}S$.
\end{definition}
\section{Contradiction and Vagueness}
If we want to interpret diagram~\ref{trad::sq:opp} using fuzzy logic (i.e., a logic of vagueness), then we must understand
how contradiction is represented in such a logic. Trillas,  Alsina, and Jacas~\cite{trillas1999} started from the classical
law of contradiction, which says that for all propositions $p$, it is impossible for both $p$ and not $p$ to be true,
that is, $p\Rightarrow\neg p$. Such propositions are called self-contradictory. In addition, if and only if $p\Rightarrow\neg q$
is true, then $p$ and $q$ are contradictory. However, I have to remark that the standard law of contradiction states that for all
$p$, $p\wedge\neg p$ is always false.  Trillas et al., assume that in general $p\Rightarrow q\equiv\neg p\vee(p\wedge q)$, that is,
the implication operator is the so-called material  implication. If we assume that $p$ and $q$ are fuzzy sets (i.e., the
equivalent of predicates in classical logic), then we can define the {\em degree of contradiction} as follows:
\begin{definition}
Suppose that $A, B\in [0,1]^{E}$, and $J$ is an implication operator, and $N$ a negation operator, then $A$ is 
contradictory to $B$ with degree equal to $r\in[0,1]$, when $J(A(x),N(B(x))=r$. 
\end{definition}
From here we can define distances and other things (e.g., see~\cite{cubillo2005, torres2007}).

Assume that we assign truth values to vague propositions that belong to the unit interval; nevertheless, we do not follow 
the principles of fuzzy mathematics. Indeed, Nicholas Smith~\cite{smith2008} proposed that contradictory propositions
 should always have a truth value equal to $0.5$. In my opinion, it is more natural to accept Smith’s proposal, and here is why. 
Assume that an eagle flies in the sky and consider the statement, ``The eagle is in a cloud.'' According to classical logic, 
this statement can be either true or false. However, when the eagle is partly inside a cloud and, consequently, partly outside 
the cloud, then, strictly speaking, the statement is false, but it is closer to reality to assume that the statement and its 
negation are both true. After all, this is why paraconsistent logics (i.e., logics that assume that some contradictions are 
true) are considered bivalent models of vagueness. In conclusion, it makes sense to assume that the degree to which a vague 
proposition is a contradiction is always $0.5$. This means that the degree to which it is not a contradiction should be $0.5$. 
I think it makes sense that this degree is less than or equal to $0.5$. In addition, this value should not be derivable from 
some formula. In different words, I propose that ``intuitionistic'' fuzzy logic (IFL) is the ideal tool to model vague 
contradictions. Figure~\ref{if::trad::sq:opp} shows how the traditional square of opposition can be modified when the 
underlying logic is IFL.
\begin{figure}[t]
\begin{center}
\includegraphics[scale=1]{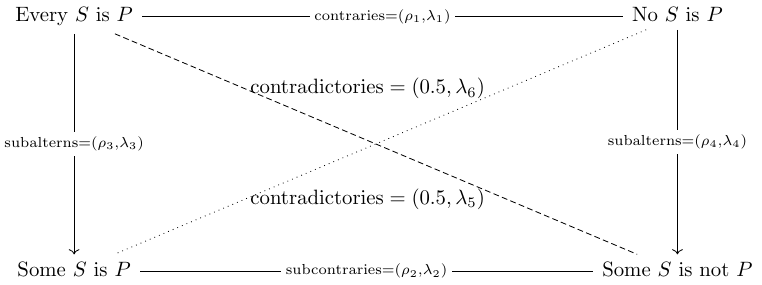}
\end{center}
\caption{An ``intuitionistic'' fuzzy version of the traditional square of opposition.}\label{if::trad::sq:opp}
\end{figure}
\section{``Intuitionistic'' Fuzzy Aristotelian Diagrams}
``Intuitionistic'' fuzzy Aristotelian diagrams are subsets of ``intuitionistic'' fuzzy Boolean algebras. To properly define 
these structures, we need some auxiliary definitions stated in~\cite{tripathy2013}.
\begin{definition}
 An {\em ``intuitionistic'' fuzzy relation} is an ``intuitionistic'' fuzzy subset of $X\times Y$; that is, the set 
$R$ given by
\begin{displaymath}
R = \bigl\{\langle (x, y), \mu_{R}(x,y), \nu_{R}(x,y)\rangle\mathrel{\bigm|} x\in X, y\in Y\bigr\},
\end{displaymath}
where $\mu_{R},\nu_{R}: X\times Y\rightarrow[0,1]$, satisfy the condition $0\le\mu_{R}(x,y) + \nu_{R}(x,y)\le 1$, 
for every $(x, y)\in X \times Y$.
\end{definition}
\begin{definition}
An``intuitionistic'' fuzzy relation in $X\times X$ is
\begin{description}
\item[reflexive] if for every $x\in X$, $\mu_{R}(x,x) = 1$ and $\nu_{R}(x,x) = 0$.
\item[perfectly antisymmetric] if for every $(x,y)\in X\times X$ with $x\neq y$ and
$\mu_R(x,y) > 0$ or $\mu_{R}(x,y) = 0$ and at the same time $\nu_{R}(x,y) < 1$, then
$\mu_{R}(x,y) = 0$ and $\nu_{R}(x,y) = 1$.
\item[transitive] if $R\circ R\subseteq R$, where $\circ$ is $\max$-$\min$ and $\min$-$\max$ composition, that is
\begin{align*}
\mu_{R}(x,z)\ge\max_{y}\Bigl[\min\bigl\{\mu_{R}(x,y),\mu_{R}(x,z)\bigr\}\Bigr]\\ 
\intertext{and}
\nu_{R}(x,z)\le\min_{y}\Bigl[\max\bigl\{\nu_{R}(x,y),\nu_{R}(y,z)\bigr\}\Bigr].
\end{align*}
\end{description}
\end{definition}
\begin{definition}
An ``intuitionistic'' fuzzy relation $R$ on $X\times X$ is said to be an ``intuitionistic'' fuzzy {\em partially 
ordered} relation if $R$ is reflexive, perfectly antisymmetric, and transitive.
\end{definition}
\begin{definition}
A crisp set $X$ on which an ``intuitionistic'' fuzzy partial ordering $R$ is defined is said to be an
``intuitionistic'' fuzzy {\em lattice} if and only if for any two element set $\{x, y\}\subset X$, the 
least upper bound (lub) and greatest lower bound (glb) exist in $X$. We denote the lub of $\{x, y\}$ by 
$x\vee y$ and the glb of $\{x, y\}$ by $x\wedge y$.
\end{definition}
\begin{definition}
An ``intuitionistic'' fuzzy lattice $(X,R)$ is {\em distributive} if and only if for all $a, b, c\in X$,
\begin{displaymath}
a\wedge(b\vee c ) = (a\wedge b)\vee (a\wedge c)\quad\text{and}\quad
a\vee(b\wedge c) = (a\vee b)\wedge (a\vee c).
\end{displaymath}
\end{definition}
\begin{theorem}
If $(X, R)$ is a {\bfseries complemented} (i.e., it has a unary operation $\neg a$ of complementation),
distributive ``intuitionistic'' fuzzy lattice, then the two De Morgan’s Laws
\begin{displaymath}
\neg(a\vee b) = \neg a\wedge\neg b\quad\text{and}\quad \neg(a\wedge b) = \neg a \vee\neg b, 
\end{displaymath}
hold for all $a, b\in X$.
\end{theorem}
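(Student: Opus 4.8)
The plan is to reduce the statement to the classical fact that in a bounded distributive lattice the complement of an element, when it exists, is unique, and then to exhibit $\neg a\wedge\neg b$ explicitly as \emph{a} complement of $a\vee b$ (and dually). First I would record the structural facts we are entitled to use. Since $\vee$ and $\wedge$ are defined as the lub and glb of two-element subsets with respect to the intuitionistic fuzzy partial order $R$, they are genuine binary operations and automatically inherit the usual identities: both are commutative and associative, they satisfy the absorption laws, and — using reflexivity of $R$ — they are idempotent. The hypothesis ``complemented'' is read in the standard way: $X$ carries a bottom $0$ and a top $1$ with $0\wedge a=0$, $1\vee a=1$, $0\vee a=a$, $1\wedge a=a$, together with the unary $\neg$ satisfying $a\wedge\neg a=0$ and $a\vee\neg a=1$; distributivity is assumed outright. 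In effect, this first step observes that $(X,R)$ may be manipulated exactly like an ordinary bounded distributive complemented lattice.

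Second, I would prove the auxiliary lemma on uniqueness of complements. Suppose $a\wedge x=a\wedge y=0$ and $a\vee x=a\vee y=1$. Then, using $1\wedge x=x$, distributivity, and the complement law,
\begin{displaymath}
x = x\wedge(a\vee y) = (x\wedge a)\vee(x\wedge y) = 0\vee(x\wedge y) = x\wedge y,
\end{displaymath}
and symmetrically $y = y\wedge(a\vee x) = (y\wedge a)\vee(y\wedge x) = y\wedge x = x\wedge y$; hence $x=y$. Here the ``intuitionistic fuzzy'' content enters only through the fact that lub and glb are \emph{single-valued} — which is precisely what perfect antisymmetry of $R$ buys us — so the lemma reduces to pure lattice algebra.

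Third, I would verify that $\neg a\wedge\neg b$ is a complement of $a\vee b$. Using distributivity, commutativity, and the bound laws,
\begin{align*}
(a\vee b)\vee(\neg a\wedge\neg b) &= \bigl((a\vee b)\vee\neg a\bigr)\wedge\bigl((a\vee b)\vee\neg b\bigr)\\
 &= \bigl(b\vee(a\vee\neg a)\bigr)\wedge\bigl(a\vee(b\vee\neg b)\bigr) = (b\vee 1)\wedge(a\vee 1) = 1,\\
(a\vee b)\wedge(\neg a\wedge\neg b) &= \bigl(a\wedge\neg a\wedge\neg b\bigr)\vee\bigl(b\wedge\neg b\wedge\neg a\bigr) = (0\wedge\neg b)\vee(0\wedge\neg a) = 0.
\end{align*}
By the uniqueness lemma, $\neg(a\vee b)=\neg a\wedge\neg b$. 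The second law follows by the dual computation, interchanging $\vee$ with $\wedge$ and $0$ with $1$ throughout; since all the hypotheses used are symmetric under this interchange, this yields $\neg(a\wedge b)=\neg a\vee\neg b$.

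Finally, a word on where the difficulty sits. The algebra in the last two paragraphs is entirely routine once one is permitted to treat $(X,R)$ as an honest bounded distributive complemented lattice, so the only genuine obstacle is justifying exactly that: that lub's and glb's with respect to an intuitionistic fuzzy partial order exist and are \emph{uniquely determined}, and hence that $\vee$, $\wedge$, and $0,1$ behave as required. This in turn rests on reading the ``perfectly antisymmetric'' clause as the natural fuzzy antisymmetry condition (it should exclude $x$ and $y$ being $R$-related both ways to any positive degree when $x\neq y$); if that clause were too weak to force such uniqueness, the uniqueness-of-complements step — and therefore the whole argument — would have to be reconsidered. I would therefore spend most of the care on this point and treat the rest as bookkeeping.
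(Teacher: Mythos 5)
Your proof is correct, but there is nothing in the paper to compare it against: the theorem is stated without any proof at all (it is evidently imported from the cited reference of Tripathy, Satapathy and Choudhury on intuitionistic fuzzy lattices and Boolean algebras). Your argument is the classical one --- complements in a bounded distributive lattice are unique, and $\neg a\wedge\neg b$ is exhibited as a complement of $a\vee b$ by two distributivity computations, with the second law obtained by duality --- and all three steps check out. You are also right to flag where the real issue lies: the paper's definition of an ``intuitionistic'' fuzzy lattice only asserts the existence of lub's and glb's of two-element subsets (it does not even explicitly posit the bounds $0$ and $1$ that ``complemented'' presupposes), and the ``perfectly antisymmetric'' clause as printed is garbled, so one must read it as the natural fuzzy antisymmetry condition in order for lub's and glb's to be single-valued and for $\vee$, $\wedge$, $0$, $1$ to obey the ordinary lattice identities. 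Once that reading is granted --- which is clearly the intent --- the structure is an honest complemented distributive lattice and your computation is complete; this is a point the paper itself leaves entirely implicit, so your proof actually supplies more than the source does.
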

Obviously, we can use the standard complementaion of ``intuitionistic'' fuzzy sets or a variation of it
suitable for our case.
\begin{definition}
A complemented distributive `intuitionistic'' fuzzy lattice is an `intuitionistic'' fuzzy Boolean algebra.
\end{definition}
And now comes the definition of fuzzy Aristotelian diagrams:
\begin{definition}
Assume that $B=(X, R)$ is an ``intuitionistic'' fuzzy Boolean algebra. Then, a fuzzy Aristotelian diagram $D$ is a 
pair $(F, B)$, where $F\subseteq X$.
\end{definition}
\section{A Category of Fuzzy Aristotelian Diagrams}
Let's now focus on how we can define a universe of fuzzy Aristotelian diagrams. First, it is necesssary to define 
morphisms between ``intuitionistic'' fuzzy Boolean algebra. In the classical case, a Boolean homomorphism is a 
function between two Boolean algebras that preserves their algebraic structure. Since, an ``intuitionistic'' fuzzy 
Boolean homomorphism is an extension of the classical concept, it is expected that the new definition will be an
extension of the classical one. Indeed, here is how we can define these new morphisms: 
\begin{definition}
A function $f$ between two ``intuitionistic'' fuzzy Boolean algebras $B_{1}$ and $B_{2}$ is a {\em homomorphism} 
if it preserves the operations: 
\begin{enumerate}
\item $f(x \mathrel{\vee_{B_1}} y) = f(x) \mathrel{\vee_{B_2}} f(y)$,
\item $f(\mathop{\neg_{B_1}}x) = \mathop{\neg_{B_2}}f(x)$, 
\item $f(0_{B_1})=0_{B_2}$, and 
\item $f(1_{B_1})=1_{B_2}$.
\end{enumerate} 
As in the case of (crisp) Aristotelian diagrams, this definition is too strong and not particularly useful. Thus,
I propose the following definition:
\end{definition}
\begin{definition}
A {\em fuzzy infomorphism} $f:(F_1,B_1)\rightarrow (F_2, B_2)$ is a function from $F_1$ to $F_2$ that satisfies the following condition: 
for all $x, y\in F_1$ it holds that if $x\mathrel{R_{B_1}}y$, then $f(x)\mathrel{S_{B_2}}f(y)$ with $R\mathrel{\le_i}S$.
\end{definition}
The following statement can be easily proved.
\begin{proposition}
The collection \textbf{IFBA} of all ``intuitionistic'' fuzzy Boolean algebras and their fuzzy infomorphism is a category.
\end{proposition}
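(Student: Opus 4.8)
The plan is to verify, one by one, the data and axioms of a category, reading each of them off the corresponding fact about ordinary set-theoretic functions. The objects of \textbf{IFBA} are the ``intuitionistic'' fuzzy Boolean algebras (which, for the purpose of talking about infomorphisms, we regard as fuzzy Aristotelian diagrams $(X,B)$ with $F=X$, so that a fuzzy infomorphism ``between $B_1$ and $B_2$'' means one between $(X_1,B_1)$ and $(X_2,B_2)$); for each ordered pair of objects, the hom-collection is the set of fuzzy infomorphisms between them, each of which is in particular a function $F_1\to F_2$. Composition will be defined as ordinary function composition $g\circ f$, and the identity on $(F,B)$ will be the identity function $\mathrm{id}_F$. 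Once we show these operations stay inside the class of fuzzy infomorphisms, associativity and the two unit laws are automatic, since they already hold for function composition.

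Before that, I would make explicit a point the text leaves implicit: the informativity order $\le_i$ on $\Re$ is to be understood as the partial order generated by the covering pairs listed in its definition, hence in particular reflexive and transitive. Reflexivity is exactly what makes $\mathrm{id}_F$ a fuzzy infomorphism: if $x\mathrel{R_{B}}y$ then $\mathrm{id}_F(x)=x$ and $\mathrm{id}_F(y)=y$ are again in the relation $R_B$, and $R\le_i R$. Transitivity is the one real verification in the proof, and it is where closure under composition comes from: given fuzzy infomorphisms $f\colon(F_1,B_1)\to(F_2,B_2)$ and $g\colon(F_2,B_2)\to(F_3,B_3)$ and elements $x,y\in F_1$ with $x\mathrel{R_{B_1}}y$, the hypothesis on $f$ supplies a relation $S$ with $f(x)\mathrel{S_{B_2}}f(y)$ and $R\le_i S$, and then the hypothesis on $g$ supplies a relation $T$ with $g(f(x))\mathrel{T_{B_3}}g(f(y))$ and $S\le_i T$; transitivity of $\le_i$ yields $R\le_i T$, so $g\circ f$ is again a fuzzy infomorphism. (Here one also uses that, for any fixed $x,y$, exactly one logical relation $R\in\Re$ holds, so the quantifier ``if $x\mathrel{R_{B_1}}y$'' is unambiguous.)

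With closure under composition and under identities in hand, the remaining axioms need no work: function composition is associative and $\mathrm{id}$ is a two-sided unit for it, and these identities of functions restrict verbatim to fuzzy infomorphisms. I would also acknowledge the mild size issue — ``the collection of all'' ``intuitionistic'' fuzzy Boolean algebras is a proper class — so \textbf{IFBA} is a large category, which is harmless here. The main obstacle, such as it is, is the bookkeeping around $\le_i$: one must confirm that the intended relation on $\Re$ is genuinely a partial order, so that reflexivity and transitivity are legitimately available, since the definition as printed lists only the Hasse/covering pairs. Everything else is routine transport along the forgetful passage to \textbf{Set}.
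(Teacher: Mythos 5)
Your proof is correct, but it takes a genuinely different route from the one in the paper, and in fact it is the more careful of the two. The paper's proof obtains associativity by citing the associativity of the min--max and max--min compositions of fuzzy relations; your proof instead takes the definition of a fuzzy infomorphism at face value --- it is an ordinary function $F_1\to F_2$ subject to a condition --- so composition is plain function composition and associativity and the unit laws come for free. What your route buys is that it isolates and actually discharges the only non-trivial obligations, which the paper's proof passes over in silence: that the composite of two fuzzy infomorphisms is again a fuzzy infomorphism (via transitivity of $\le_i$, chaining $R\le_i S\le_i T$) and that $\mathrm{id}_F$ is one (via reflexivity of $\le_i$); your observation that $\le_i$ as printed lists only covering pairs and must be read as the generated partial order is exactly the point that makes these steps legitimate. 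One small caveat: your parenthetical claim that exactly one relation in $\Re$ holds between any fixed $x,y$ fails in degenerate cases (for instance $x=0_B$, $y=1_B$ satisfy both $\text{LI}_B$ and $\text{CD}_B$), but your composition argument does not actually need uniqueness --- it only needs that for each $R$ that holds there is some $S\ge_i R$ that holds --- so nothing breaks. You might also note, as a limitation shared with the paper, that the logical relations of Definition~2.3 are stated for crisp Boolean algebras and their extension to the ``intuitionistic'' fuzzy setting is only sketched after the proposition, so both proofs implicitly assume that extension is in place.
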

\begin{proof}
The objects of the category are ``intuitionistic'' fuzzy Boolean algebras and the morphisms are the fuzzy infomorphism.
The min-max composition is associative (see~\cite[p. 81]{syropoulos2020}) and the max-min composition is associative for three
fuzzy relations (see~\cite{Shakhatreh2020}). Therefore, composition of fuzzy infomorphism is associative. The identity morphism
is one that leaves an  ``intuitionistic'' fuzzy Boolean algebra as is. Thus, \textbf{IFBA} is a category.
\end{proof}

The various relations presented in the definition~\ref{arist:diag:rel} can be easily extended in the our case. For example,
bi-implication is meaningful to a degree. In particular, given an ``intuitionistic'' fuzzy Boolean algebra $(F,B)$, then 
for $x,y\in F$, we can say that $x=y$ if and only if the membership degree of $x\mathrel{R_{B}}y$ is equal to the nonmembership 
degree of $x\mathrel{R_{B}}y$. However, since this is too strict, we can say that the difference of the membership and the nonmembership 
degrees are within a specific range. For example, we can say that the differences can be less that $0.01$.
\section{Conclusions}
I have described (crisp) Aristotelian diagrams and I introduced a fuzzy version of these diagrams. In addition,
I described the general characteristics of a mathematical universe that can host them. The next thing to do is
to examine this mathematical universe and reveal its properties.

\end{document}